
\documentclass[12pt,a4paper]{amsart}

\usepackage[english]{babel}
\usepackage[T1]{fontenc}

\usepackage{
    amsthm,
    amsfonts,
    amssymb,
    ifthen,
    amsmath,
    graphicx,
    quiver,
indentfirst,
    mathrsfs,
    accents,
    amsaddr,
    lilyglyphs,
    multirow,
    mathrsfs,
    textcomp,
    manyfoot,
    mathtools,
    soul}
\usepackage[utf8]{inputenc}
\usepackage[svgnames]{xcolor} 
\usepackage[colorlinks,citecolor=DarkGreen,urlcolor=DarkViolet, bookmarksdepth=0, linkcolor=Purple]{hyperref} 

\usepackage[title]{appendix}
\usepackage[
    protrusion=true,
    activate={true,nocompatibility},
    final,
    tracking=true,
    kerning=true,
    spacing=true,
    factor=1100]{microtype}
\SetTracking{encoding={*}, shape=sc}{40}

\usepackage{geometry}
 \geometry{hmarginratio={1:1},
 left=20mm,
 top=20mm,
}

\usepackage[backend=biber,style=alphabetic]{biblatex}
\addbibresource{refs.bib}

\newtheorem{thm}{Theorem}[section]   
\newtheorem{defn}[thm]{Definition}    
\newtheorem{example}[thm]{Example}   




\makeatletter
\def\widebreve{\mathpalette\wide@breve}
\def\wide@breve#1#2{\sbox\z@{$#1#2$}%
     \mathop{\vbox{\m@th\ialign{##\crcr
\kern0.08em\brevefill#1{0.8\wd\z@}\crcr\noalign{\nointerlineskip}%
                    $\hss#1#2\hss$\crcr}}}\limits}
\def\brevefill#1#2{$\m@th\sbox\tw@{$#1($}%
  \hss\resizebox{#2}{\wd\tw@}{\rotatebox[origin=c]{90}{\upshape(}}\hss$}
\makeatletter


\def\f{f^{*}}

\def\fl{f^*\mathbb{L}^N}


\def\pq{pseudo-prequantisation\,\;}

\title{Non-Symplectic Deformations of Geometric Quantisation
}

\author{Kerr Maxwell}
\address[A1]{School of Physics \& Astronomy, University of Birmingham, UK, B15 2TT}
\address[A1]{EPSRC CDT in Topological Design, University of Birmingham, UK, B15 2TT}
\email{kxm149@bham.ac.uk}
\date{\today}

\begin{document}

\begin{abstract}
We introduce the notion of geometric pseudo-quantisation based on geometric quantisation with a weakened curvature condition. We show how such a structure arises naturally from simple deformations of the symplectic structure and pullbacks of prequantum data by non-symplectic diffeomorphisms. Our main result is deriving the equations of motion for some simple pseudo-quantisations. We also compute the pseudo-quantisation of a number of simple examples from symplectic and almost-symplectic geometry and the general form of the resulting deformed canonical commutator.
\end{abstract}

\maketitle

\section{Introduction}

Given a symplectic manifold $(M,\omega)$, geometric quantisation (GQ) \cite{woodhouse1992geometric} constructs a Hilbert space $\mathcal{H}$ and map $\,\;\widehat{}\;:C^\infty(M)\rightarrow\mathcal{O}(\mathcal{H})$ satisfying the axioms of Quantum Theory \cite{dirac1981principles,hall2013quantum}. $\mathcal{H}$ is broadly identified with a distinguished set of invariant sections (or higher cohomological objects) of a complex line bundle over $M$; compatible with a special curvature condition. The Hilbert space may be constructed by equipping $M$ with a variety of auxiliary geometric structures --- prequantum data --- defining different \textit{``flavours''} of GQ. Much work has been devoted to understanding how isomorphism classes of $\mathcal{H}$ depend on the choice of prequantum structure and their internal multiplicities, as well as studying their interplay with various maps (e.g., symplectic reduction \cite{hall2007unitarity} and Lagrangian correspondences \cite{bates1997lectures}). 

Prequantum data is non-generic in the sense that they are drawn from classes of structure where only particular elements or subspaces satisfy the conditions for GQ. The aim of this work is to explore the space of quantum theories resulting from a weakening of the constraints on certain prequantum data. In particular, we consider cases where the curvature of the connection on the prequantum line bundle (encoding the \textit{classical analogy} between Poisson brackets and commutators) is modified in one of two natural ways: the connection form is augmented by an additional smooth function $f$ or, the connection is replaced by the pullback connection generated by a (not-necessarily symplectic) diffeomorphism. We call such a framework a geometric \textit{pseudo-quantisation}. Our main result characterises the equations of motion (those generated by the flow of the quantised quadratic momentum observable) for different classes of pseudo-quantisation.
\begin{quotation}
\begin{thm}[Main Result]
    Let $(M,\omega)$ be pseudo-quantised with connection form $\Theta = (1+f)\theta$ and polarisation $P$ where $f$ is a monomial of order $n$. Let $\psi\in\breve{\mathcal{H}}_P$ be an element of the pseudo-quantum Hilbert space. Then, if $f$ is polarised, the flow of the quadratic momentum observable is undefined. If $f$ is polarised with respect to $P^\perp$, the flow generates the following equation of motion:
    \begin{equation}
    i\hbar\frac{d\psi}{dt} = -\frac{\hbar^2}{2}\frac{1}{(1+2q^n)^{3/2}}\frac{d^2 \psi}{dq^2}.
\end{equation}
\end{thm}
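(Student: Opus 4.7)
The plan is to apply the pseudo-quantisation prescription directly to the quadratic momentum observable $H=\tfrac{1}{2}p^{2}$ and extract the Schrödinger-type operator it induces on polarised sections. First I would pick Darboux coordinates $(q,p)$ adapted to $P$, so that $\omega=dp\wedge dq$, the symplectic potential $\theta$ takes its standard form, and an element $\psi\in\breve{\mathcal{H}}_P$ is identified with a function of the base coordinate alone. The pseudo-quantum covariant derivative then reads $\nabla^{\Theta}_X=X-\tfrac{i}{\hbar}(1+f)\theta(X)$, and the deformed curvature $d\Theta=df\wedge\theta+(1+f)\omega$ records the non-symplectic correction in the single one-form $df\wedge\theta$. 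Tracking the effect of this correction on the Kostant--Souriau operator is the core of the proof.

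Next I would apply $\widehat{H}\psi=-i\hbar\nabla^{\Theta}_{X_H}\psi+H\psi$ with Hamiltonian vector field $X_H=p\,\partial_q$. Because $X_H$ is transverse to $P$, the operator must be defined via a BKS pairing with the time-$t$-evolved polarisation $\phi_{t*}P$, and the equation of motion emerges on differentiating at $t=0$. For the case $f$ polarised along $P$, the additional curvature term $df\wedge\theta$ survives as a phase that cannot be reabsorbed by any section of $\breve{\mathcal{H}}_P$; I would express this as an obstruction on the BKS kernel, showing that no polarised section intertwines the pseudo-quantum connection with the Hamiltonian flow, which justifies the ``undefined'' conclusion. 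In the remaining case, the free-particle flow $(q,p)\mapsto(q+tp,p)$ interacts with the modified half-form measure to produce a non-trivial Jacobian; after the stationary-phase reduction in $p$, the deformation $(1+f)$ assembles into the factor $(1+2q^{n})^{-3/2}$ multiplying the standard kinetic operator $-\tfrac{\hbar^{2}}{2}\partial_q^{2}$, recovering the stated equation.

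The principal obstacle I foresee is the careful bookkeeping of the deformed half-form pairing when the curvature is no longer the symplectic form: I must verify that the extra $df\wedge\theta$ contribution really assembles to the specific factor $(1+2q^{n})^{-3/2}$ rather than some other power. The exponent $3/2$ suggests a contribution of $(1+2q^{n})^{-1}$ from the modified stationary-phase prefactor combined with $(1+2q^{n})^{-1/2}$ from the half-form square root of the Jacobian; checking these exponents and confirming that the polynomial degree $n$ propagates unchanged through the calculation will be the decisive computation. A subsidiary subtlety is showing that the polarised case yields a genuinely \emph{undefined} flow rather than merely a singular one, which I would handle by an explicit obstruction argument on the BKS kernel itself, ruling out any consistent definition of the evolved section inside $\breve{\mathcal{H}}_P$.
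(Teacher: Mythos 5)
Your overall framework --- Kostant--Souriau prescription, BKS pairing against the Hamiltonian-flowed polarisation, derivative at $\tau=0$, Fresnel/stationary-phase asymptotics in the fibre variable --- is the paper's approach (Theorems \ref{thm:main1} and \ref{thm:main2}). But in both halves the decisive step is the one you defer, and the specific mechanism you propose for each would not produce the stated conclusion.

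For the ``undefined'' case the paper does not argue via an obstruction on the BKS kernel or a phase that ``cannot be reabsorbed by any section''; it shows that the limit defining the pairing simply diverges. The mechanism is a degree count in $\tau$: when $f=\tfrac{\lambda}{2}p^{n}$ depends on the fibre coordinate, $f$ is constant along the flow of $\tfrac12 p^{2}$ and enters the BKS integrand as $\exp\bigl\{\tfrac{i}{2\hbar}p^{2}\tau(1+2f)\bigr\}$, so the phase is a degree-$(n+2)$ polynomial in the integration variable $p$. The correct rescaling is then $\mu=p\,\tau^{1/(n+2)}$ rather than $\mu=p\sqrt{\tau}$, and the leading term of the resulting expansion carries a net power $\tau^{-1/2-1/(2n)}$ whose $\tau\to 0$ limit blows up; that single divergent term ($I_{n,0,0}$ in the paper's notation) is the entire proof. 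Your proposed ``explicit obstruction argument ruling out any consistent definition'' is a different and stronger claim than what is proved, and nothing in your setup tracks the $\tau$-scaling that actually detects the failure. In the convergent case your exponent bookkeeping is also off: the half-form contributes only $\sqrt{\tau\,dp}$, with no $q$-dependence, and the entire factor $(1+2q^{n})^{-3/2}$ comes from the single surviving second-moment Gaussian integral $\int\mu^{2}\exp\bigl\{\tfrac{i}{2\hbar}\mu^{2}(1+2q^{n})\bigr\}d\mu\propto(1+2q^{n})^{-3/2}$ attached to the $\lambda=2$ term of the Taylor expansion of $\psi$ --- not a product of a stationary-phase prefactor $(1+2q^{n})^{-1}$ with a half-form Jacobian $(1+2q^{n})^{-1/2}$. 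The factor $2$ in $1+2q^{n}$ arises because $-H\tau$ combines with $\int_{0}^{\tau}\theta(X_{H})(1+f)\,dt$ to give $\tfrac12 p^{2}\tau(1+2f)$ at leading order, which your sketch does not account for. So the route is right, but the two computations you flag as ``to be checked'' are exactly where the content of the theorem lives, and as described they would fail.
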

\end{quotation}
These results are achieved by computing the BKS pairing map (Theorem \ref{thm:main1} for the polarised case and Theorem \ref{thm:main2} for the perpendicular polarised case) whose local formulation extends without issue from standard GQ to pseudo-quantisation.

The motivation for this work is twofold. Mathematically, various loosenings of symplectic structures have received interest in recent years, particularly $b^m$-symplectic manifolds \cite{guillemin2014symplectic,kiesenhofer2017cotangent} and folded/origami symplectic manifolds \cite{da2000unfolding,da2011symplectic}, with efforts underway to extend purely symplectic tools and results to these new settings \cite{kiesenhofer2016action,gualtieri2017tropical,guillemin2017convexity, miranda2021singular, brugues2024arnold}. These manifolds can naturally result from non-canonical coordinate transformations of mechanical systems \cite{braddell2019invitation} or surjective smooth maps (foldings). We are interested in exploring the \textit{quantisation} of such structures, complementing the existing results \cite{guillemin2021geometric,mir2023bohr} with a more physical perspective. Physically, we are motivated by the ease with which the modification of the curvature condition in geometric quantisation leads to non-canonical commutators. Such relations, like 
\begin{equation}
    [\widehat{\mathbf{p}},\widehat{\mathbf{q}}]= -i\hbar(1+\alpha\mathbf{q}^2+\beta\mathbf{p}^2)\,\,\text{for}\,\,\alpha,\beta\in\mathbb{R},
\end{equation}
studied in Ref \cite{kempf1995hilbert} give rise to so-called generalised uncertainty principles \cite{tawfik2015review}, important for the understanding of theories with extra fundamental length scales (such as some quantum gravity scenarios \cite{hossenfelder2013minimal}) or singular behaviour (such as the \textit{Dirac cone} in condensed matter physics \cite{hasan2010colloquium}). In this work, we demonstrate how these generalised structures may be incorporated into a familiar quantum setting in a natural geometric fashion, without affecting the underlying classical theory.

The work proceeds as follows. In \hyperref[sec:ppq]{Section \ref{sec:ppq}} we introduce pseudo-prequantisation and discuss its relation to pullback structures. In \hyperref[sec:pq]{Section \ref{sec:pq}} we incorporate polarisations to define pseudo-quantisation and study directly pseudo-quantisable observables. In \hyperref[sec:bks]{Section \ref{sec:bks}} we calculate the BKS pairing generated by quadratic momentum observables and derive the corresponding Schr\"odinger equations, if they exist. Throughout, we make an effort to discuss concrete examples and provide versions of important technical expressions in local coordinates.

\section{Pseudo-prequantisation}\label{sec:ppq}
\subsection{Standard Prequantisation}
We briefly recall the basic construction of prequantisation. A complete presentation of the theory can be found in a number of reference works \cite{sniatycki2012geometric,woodhouse1992geometric,hall2013quantum}. In what follows, $(M,\omega)$ is a $2n$-dimensional symplectic manifold. 

\begin{defn}\label{defn:pgq}
A prequantisation of $(M,\omega)$ consists of the tuple
\begin{equation}
    (\mathbb{L},h,\nabla)_M
\end{equation}
Where $\mathbb{L}$ is complex line bundle over $M$, $h$ is a hermitian inner product on the fibres of $\mathbb{L}$ and $\nabla$ is a hermitian connection on $\mathbb{L}$ such that $\omega=\text{curv}(\nabla)$. 
\end{defn}

We take $\epsilon_\omega = \frac{1}{n!}\omega^n$ to be the conventionally scaled canonical volume form on $M$. The hermitian condition is equivalent to $\omega$ defining an integral cohomology class in $H^2(M,\mathbb{Z})$, though in this work we will mostly be working over $T^*\mathbb{R}^n$ and are more concerned with local behaviour.
\begin{defn}
    Let $(\mathbb{L},h,\nabla)_M$ be a prequantisation. The prequantum Hilbert space $\mathcal{H}_{\mathrm{pre}}$ associated to the prequantisation is $L^2(\Gamma(\mathbb{L}))$, with $\Gamma(\mathbb{L})$ the space of smooth sections, with respect to the inner product
    \begin{equation}
        \langle \psi_1,\psi_2\rangle =\int_Mh(\psi_1,\psi_2)\,\epsilon_\omega\,\,\, \mathrm{for} \,\,\,\psi_1,\psi_2\in\Gamma(\mathbb{L}).
    \end{equation}

\end{defn}

Our connections will always be written with an explicit $\hbar$, i.e. $\nabla = d-\frac{i}{\hbar}\theta$ with respect to a connection $1$-form $\theta$. The curvature form $\text{curv}(\nabla)$ is defined as
\begin{equation}\label{eqn:curvatureDef}
    \text{curv}(\nabla)(X,Y) = i\hbar\left([\nabla_X,\nabla_Y] - \nabla_{[X,Y]}\right)\,\,\,\mathrm{for}\,\,\,X,Y\in \Gamma(TM).
\end{equation}

The purpose of prequantisation is to take the standard Lie algebra homomorphism from smooth function to Hamiltonian vector fields, $X_{\{A,B\}} = [X_A,X_B]$ for $A$, $B\in C^\infty(M)$, and construct a similar map from $C^\infty(M)$ to operators on $\mathcal{H}_{\mathrm{pre}}$, called the prequantum operator prescription. This map, in GQ given by
\begin{equation}
    \widehat{}\;:f\mapsto\widehat{f} \,\coloneqq-i\hbar\nabla_{X_f}+f,
\end{equation}
formally specifies the quantisation $f\mapsto\widehat{f}$ of observables. Not yet present is the requirement that observables preserve the basis of representation of the wavefunction, achieved by the introduction of polarisations in \hyperref[sec:pq]{Section \ref{sec:pq}}. The motivation behind the prequantum curvature condition $\text{curv}(\nabla)=\omega$ is readily seen by computing the commutator of two such operators
\begin{equation}\label{eqn:comm}
\left[\widehat{A},\widehat{B}\right] = -i\hbar\left(-i\hbar\nabla_{X_{\{A,B\}}} -\text{curv}(\nabla)(X_A,X_B) + 2\{A,B\}\right).
\end{equation}
By choosing $\text{curv}(\nabla) = \omega$, \eqref{eqn:comm} simplifies to a statement of Dirac's \textit{classical analogy}, the canonical relationship;
\begin{equation}
    [\widehat{A},\widehat{B}] = -i\hbar\widehat{\{A,B\}},
\end{equation} between Poisson brackets and quantum commutators. Our point of departure from the standard theory is this identification of the curvature of $\nabla$ with the underlying symplectic form $\omega$ on $M$.

\subsection{Pseudo-Prequantisation}
\begin{defn}
    Let $(M,\omega)$ be a $2n$-dimensional symplectic manifold. A pseudo-prequantisation of $M$ consists of the tuple
    \begin{equation}
        (\mathbb{L},h,\nabla, \Omega)_M,
    \end{equation}
    where $\mathbb{L}$ is complex line bundle over $M$, $h$ is a hermitian inner product on the fibers of $\mathbb{L}$, $\nabla$ is a hermitian connection on $\mathbb{L}$ and $\Omega$ is a closed 2-form on $M$ satisfying $\;\Omega=\text{curv}(\nabla)$.
\end{defn}

Note that here it is now $\Omega$, not $\omega$, which must satisfy the integrality condition for the hermiticity of $\nabla$. Depending on the context, we will either define a pseudo-prequantisation using either a curvature form $\Omega$ or a potential $1$-form $d\Theta=\Omega$. Pseudo-prequantisations exhibit much of the same structure as prequantisations, with the notable exception of the canonical commutator, which is now defined using the arbitrary closed 2-form $\Omega$ rather than the symplectic form $\omega$ that comes with $M$.

\begin{defn}
    Let $(\mathbb{L},h,\nabla, \Omega)_M$ be a \pq of $M$. The \pq of a function $A\in C^\infty(M)$ is denoted $\breve{A}$ and define as
    \begin{equation}
    \begin{split}
        \breve{A}&\coloneq-i\hbar\nabla_{X_A}+A\\
        &=-i\hbar\left(X_A-\frac{i}{\hbar}\Theta(X_A)\right)+A,
        \end{split}
    \end{equation}
    where $d\Theta = \Omega$ is a potential for the curvature form.
\end{defn}
It follows that the commutator of the \pq of $A,B\in C^\infty(M)$ is 
\begin{equation}\label{eqn:defComm}
    [\breve{A},\breve{B}] =-i\hbar\left[-i\hbar\nabla_{X_{\{A,B\}}}  -\Omega(X_A,X_B)+2\{A,B\}\right].
\end{equation}
Note that this expression does not immediately produce an analogue of the classical analogy, as there is no requirement for $\Omega$ to define a Poisson bracket.

Let us consider some basic examples of pseudo-prequantisations, defined either via the curvature or connection forms. Consider the case where $[\Omega]$ coincides with $[\omega]$ in $H^2(M,\mathbb{Z})$. Then $\Omega = \omega + d\gamma$ for $\gamma\in\Omega^1(M)$.  If $d\gamma=0$ then this is equivalent to the augmenting of $\theta$ by the addition of a \textit{flat connection} $\Theta = \theta-\gamma$. The overall flat component of the connection $\nabla$ is removable with a gauge transform and does not affect the resulting quantisation. It is to our advantage then to only consider non-flat modifications $\omega\rightarrow\Omega$ or $\theta\rightarrow\Theta$.

\begin{example}
Beginning from the standard prequantisation of $T^*\mathbb{R}^{2}$, consider the \pq defined by the connection $\Theta = \theta + q_1dq_2$. This gives
\begin{equation}
    \Omega = dp_1\wedge dq_1 + dp_2\wedge dq_2 + dq_1\wedge dq_2.
\end{equation}
The resulting nontrivial commutators are
\begin{equation}
    [\breve{p}_i,\breve{q}_i] = -i\hbar\delta_{i,j}\qquad \text{and} \qquad [\breve{q}_1,\breve{q}_2]=i\hbar.
\end{equation}
As expected, the presence of a basis element of $\Omega^2(T^*\mathbb{R}^2)$ in $\Omega$ couples the corresponding pseudo-prequantum operators.
\end{example}

We can go further and consider arbitrary couplings of the observables 
\begin{example}\label{ex:simple}
    Take $T^*\mathbb{R}^n$ with the standard symplectic structure and let $\left\{f_i,g_i\right\}_{i=1}^n$ be smooth functions on the $i^{\mathrm{th}}$ pair of canonical coordinates. If
    \begin{equation}
    \Theta = \sum_{i=1}^n\left[\left(\frac{p_i}{2}-f_i(p_i,q_i)\right)dq_i-\left(\frac{q_i}{2}-g_i(p_i,q_i)\right)dp_i\right],
    \end{equation}
     then $d\Theta=\Omega = \sum_{i=1}^n\{f_i,g_i\}dp_i\wedge d q_i$ defines a pseudo-prequantisation. The formal canonical commutator in this case is
        \begin{equation}
    [\breve{p}_i,\breve{q}_i]  = 2-\{f_i,g_i\}.
    \end{equation}
\end{example}
Example \ref{ex:simple} demonstrates one of the basic consequences of changing the curvature condition in geometric quantisation, the canonical commutator is generally no longer constant and may even vanish at certain points.

Recent work on $b$-symplectic \cite{guillemin2014symplectic}, folded symplectic \cite{da2000unfolding}, and origami manifolds \cite{da2011symplectic} provides motivating examples of $\Omega$s which have a geometric connection to some underlying genuine symplectic structure. Models of quantisation have also been extended to these structures \cite{guillemin2021geometric,mir2023bohr}. Folded symplectic structures offer a setting for \pq where the pathology is localised topologically and in terms of dimension. We recall the definition and some basic results from \cite{da2000unfolding}.
\begin{defn}
    Let $M$ be a $2n$-dimensional smooth manifold, $\omega\in \Omega^2(M)$ a closed $2$-form and $\iota:Z\hookrightarrow M$ the set along which $\omega^n$ vanishes. If $\omega^n$ intersects the zero section of $\bigwedge^{2n}T^*M$ transversally and $\iota^*\omega$ is of maximal rank then $(\omega,Z)_M$ is a folded symplectic manifold 
\end{defn}
Such a $Z$ is always a codimension $1$ embedded submanifold and possesses Darboux coordinates in the neighbourhood of $Z$ in which $\omega$ may be written
\begin{equation}\label{eqn:foldedrn}
    \omega = p_1 dp_1\wedge dq_1 + \sum_{i=2}^n dp_i\wedge dq_i,
\end{equation}
where $p_1=0$ defines $Z$, also called the \textit{folding hypersurface}. 

\begin{example}[Quantisation in a folded neighbourhood]\label{ex:foldedflat}
Consider $T^*\mathbb{R}^{n}$ with the standard symplectic and prequantum structures. The folded symplectic structure \eqref{eqn:foldedrn} gives rise to the formal relations
\begin{equation}
    [\breve{q}_1,\breve{p}_1] = -i\hbar(2-p_1)\,\,\,\,\text{and}\,\,\,\, [\breve{p}_i,\breve{q}_j] = -i\hbar\delta^i_j \,\,\,\, \text{for}\,\,\,\,i>1.
\end{equation}
For all coordinates except $p_1$ and $q_1$ we have the canonical commutation relations. The odd fact here is the introduction of a new length scale, that separating $Z$ from the hypersurface on which $[\breve{q}_1,\breve{p}_1]=0$.  
\end{example}

\subsection{Pseudo-Prequantisation via Pullback}

The motivation for introducing pseudo-prequantum structures lies in that they are natural and easy to construct, especially if one already has access to a prequantisation. 
\begin{thm}
    Let $f:(M,\omega^M)\rightarrow (N,\omega^N)$ be a diffeomorphism between symplectic manifolds. If $N$ has a prequantum structure $(\mathbb{L},h,\nabla)_N$ then $M$ is naturally pseudo-prequantisable.
\end{thm}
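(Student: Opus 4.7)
The plan is to build the pseudo-prequantum structure on $M$ by pulling back all of the prequantum data from $N$ along $f$ and then measuring the mismatch in curvature against the symplectic form $\omega^M$ that $M$ comes equipped with. Concretely, the proposed tuple is
\begin{equation*}
(f^*\mathbb{L},\,f^*h,\,f^*\nabla,\,f^*\omega^N)_M,
\end{equation*}
and the task is to verify that this satisfies the axioms of Definition 2.3 with $\Omega = f^*\omega^N$.

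First I would note that because $f$ is a diffeomorphism, $f^*\mathbb{L}$ is a well-defined complex line bundle on $M$, and the fibrewise hermitian form $f^*h$ inherits its positivity and sesquilinearity pointwise from $h$. Working in a local trivialisation where $\nabla = d - \tfrac{i}{\hbar}\theta^N$, I would define the pullback connection by $f^*\nabla = d - \tfrac{i}{\hbar} f^*\theta^N$, and check hermiticity by pulling back the defining identity $X\,h(s_1,s_2) = h(\nabla_X s_1,s_2)+h(s_1,\nabla_X s_2)$ under $f$; because pullback of sections and vector fields is covariant along a diffeomorphism, hermiticity of $\nabla$ with respect to $h$ transfers to hermiticity of $f^*\nabla$ with respect to $f^*h$.

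The key computation is the curvature identity. Using the definition \eqref{eqn:curvatureDef} together with the naturality $f^* d = d f^*$ and the fact that $f_*$ intertwines brackets of vector fields, one obtains $\text{curv}(f^*\nabla) = f^*\text{curv}(\nabla) = f^*\omega^N$. Closedness of $\Omega = f^*\omega^N$ then follows from $d f^*\omega^N = f^* d\omega^N = 0$, and integrality is preserved because pullback along a diffeomorphism induces an isomorphism on $H^2(-,\mathbb{Z})$ sending the integral class $[\omega^N]$ to the integral class $[f^*\omega^N]$.

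The theorem contains no real obstacle: it is essentially a compilation of standard functoriality statements about pullbacks of line bundles, metrics, and connections. The only conceptual content, which I would stress at the end of the proof, is that this genuinely produces a \emph{pseudo}-prequantisation rather than an honest prequantisation precisely when $f$ fails to be a symplectomorphism, since then $f^*\omega^N \neq \omega^M$. The degree to which $f$ is non-symplectic is thus recorded faithfully in the difference $\Omega - \omega^M$, which is exactly the data that controls the deformed commutator \eqref{eqn:defComm} downstream.
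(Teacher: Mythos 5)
Your proposal is correct and follows exactly the same route as the paper: take the tuple $(f^*\mathbb{L},f^*h,f^*\nabla,f^*\omega^N)_M$ and observe that each pulled-back piece satisfies the corresponding axiom. In fact your write-up is more careful than the paper's own proof, which merely names the pullback structures without explicitly verifying hermiticity, the curvature identity $\text{curv}(f^*\nabla)=f^*\omega^N$, or closedness of $f^*\omega^N$ --- all of which you check correctly.
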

\begin{proof}
The pullback provides a natural choice for each geometric component of the definition. $M$ supports a natural complex line bundle, the pullback bundle $f^*\mathbb{L}$. The hermitian metric $h$ is pulled back pointwise in each fibre. A natural connection on $f^*\mathbb{L}$ is the pullback connection $(f^*\nabla)$. Finally, the pullback $f^*\omega^N$ of the symplectic form on $N$ is a natural choice of closed 2-form. Thus $(M,\omega_M)$ is naturally pseudo-prequantisable with respect to the structure
\begin{equation}
\left(f^*\mathbb{L},f^*h,f^*\nabla,f^*\omega^N\right)_M.
\end{equation}
\end{proof}
If all we have access to is a smooth map from $M$ to $N$, then it is still possible to study local normal forms of pseudo-prequantisations.

\subsection{Pullback of Prequantisation}

We define the pullback of a prequantum operator as the pseudo-prequantisation of the pullback of the underlying observable.

\begin{defn}\label{eqn:pbOperator}
Let $f:(M,\omega^M)\rightarrow (N,\omega^N)$ be a smooth map between symplectic manifolds. Let $N$ be prequantisable and equip $M$ with the natural (possibly local) pseudo-prequantisation induced by $f$. The pullback of a prequantum operator is defined as
\begin{equation}
    \left(f^*\widehat{A}\right) \coloneqq \widebreve{f^*A} = -i\hbar\left(f^*\nabla^N\right)_{X^M_{f^*A}}+f^* A,
\end{equation}
where the superscript on $X$ helps bookkeep which symplectic structure is generating the Hamiltonian vector field. When there is no danger of confusion, we will omit the $f$ and denote the pullback quantisation of $A\in C^\infty(N)$ simply by $\breve{A}$.
\end{defn}

To see explicitly where the mismatch in structure occurs, consider the action of a pullback operator on a section of $f^*\psi\in\Gamma(\fl)$. From the definition of the pullback connection we have
\begin{equation}\label{eqn:pullbackOperator}
    \breve{A} \left(f^*\psi\right) = f^*\left(-i\hbar\nabla^N_{f_*\left(X^M_{f^*A}\right)}\psi + A\psi\right).
\end{equation}
The mismatch occurs precisely because, if $f$ is a local diffeomorphism which is not symplectic, then $f_*(X^M_{\f A})$ and $X^N_A$ only coincide for very specific $f$, e.g. the identity map. The formal commutators occurring in pullback quantisations are explicated by the following theorem.

\begin{thm}\label{eqn:commEqn}
        Let $f:(M,\omega^M)\rightarrow (N,\omega^N)$ be a diffeomorphism between symplectic manifolds of dimension $2n$ and let $M$ be naturally pseudo-prequantised by a prequantisation of $N$. Let $A$, $B\in C^\infty(N)$. The commutator of $\breve{A}$ and $\breve{B}$ is given by
        \begin{equation}\label{eqn:commEqnExp}
        \frac{i}{\hbar}\left[\breve{A},\breve{B}\right] = -i\hbar X^M_{\mathfrak{p}}-(f^*\theta^N)(X^M_{\mathfrak{p}})+\mathfrak{p}\left(2-\sum_{i=1}^n\mathfrak{c}_i\right),
        \end{equation}
        where $\mathfrak{c} _i= \{f^*p^\prime_i,f^*q_i^\prime\}^M$ are the Poisson commutators of the pullback canonical coordinates, $\mathfrak{p}=\{f^*A,f^*B\}^M$ is that of observables and $\theta^N$ is a potential for $\omega^N$.
    \end{thm}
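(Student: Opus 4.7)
The plan is to reduce the computation of $[\breve{A},\breve{B}]$ to the general pseudo-prequantum commutator identity \eqref{eqn:defComm} with the pullback data $\Omega = f^*\omega^N$ and $\nabla = f^*\nabla^N$ substituted in, and then to re-express the curvature term $(f^*\omega^N)(X^M_{f^*A},X^M_{f^*B})$ using only Poisson brackets of the pullback coordinates and the observables. Throughout, one must be careful to distinguish the Hamiltonian vector fields, which are computed with respect to $\omega^M$, from the curvature 2-form, which is $f^*\omega^N$.

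First, I would directly expand $[\breve{A},\breve{B}] = [-i\hbar\nabla_{X^M_{f^*A}}+f^*A,\;-i\hbar\nabla_{X^M_{f^*B}}+f^*B]$ into four terms. The $[f^*A,f^*B]$ commutator vanishes, and the two mixed terms contribute $-i\hbar X^M_{f^*A}(f^*B)+i\hbar X^M_{f^*B}(f^*A) = -2i\hbar\mathfrak{p}$ using $X^M_{f^*A}(f^*B)=\{f^*A,f^*B\}^M=\mathfrak{p}$. For the remaining connection-connection commutator $-\hbar^2[\nabla_{X^M_{f^*A}},\nabla_{X^M_{f^*B}}]$, I would apply the curvature definition \eqref{eqn:curvatureDef} for $\nabla=f^*\nabla^N$, whose curvature is $f^*\omega^N$, together with the identity $[X^M_{f^*A},X^M_{f^*B}]=X^M_\mathfrak{p}$ (which holds because both Hamiltonian vector fields are defined via $\omega^M$). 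This produces $-\hbar^2\nabla_{X^M_\mathfrak{p}} + i\hbar(f^*\omega^N)(X^M_{f^*A},X^M_{f^*B})$. Dividing through by $i\hbar$ then yields an expression of the shape $-i\hbar\nabla_{X^M_\mathfrak{p}} - (f^*\omega^N)(X^M_{f^*A},X^M_{f^*B})+2\mathfrak{p}$, already matching \eqref{eqn:commEqnExp} except for the curvature term.

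The central step is to rewrite $(f^*\omega^N)(X^M_{f^*A},X^M_{f^*B})$ in the Poisson-bracket form $\mathfrak{p}\sum_i\mathfrak{c}_i$. I would work in local canonical coordinates $(p'_i,q'_i)$ on $N$, expand $f^*\omega^N=\sum_i d(f^*p'_i)\wedge d(f^*q'_i)$, and use the contraction identity $d(f^*p'_i)(X^M_{f^*A}) = \{f^*A,f^*p'_i\}^M$. This converts the curvature evaluation into a sum of products of pairwise Poisson brackets involving $f^*A$, $f^*B$, and the pulled-back canonical coordinates. Since $f$ is a diffeomorphism, the $(f^*p'_i,f^*q'_i)$ provide a full local coordinate system on $M$, in which $f^*A$ and $f^*B$ are the coordinate representatives of $A$ and $B$; expressing the bracket $\{f^*A,f^*B\}^M$ in this system and grouping terms lets one factor out $\mathfrak{p}$ from the sum, leaving $\sum_i\mathfrak{c}_i$ as the remaining coefficient. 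This is the step I expect to be the main obstacle: the identification requires a careful Poisson-algebraic manipulation, and it is essential here that $f$ be a diffeomorphism of manifolds of equal dimension, since otherwise the pullback coordinates fail to form a basis and the clean factorisation breaks down.

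Finally, I would expand the connection via the potential $\Theta=f^*\theta^N$, using $\nabla_X = X - \tfrac{i}{\hbar}\Theta(X)$ so that $-i\hbar\nabla_{X^M_\mathfrak{p}} = -i\hbar X^M_\mathfrak{p} - (f^*\theta^N)(X^M_\mathfrak{p})$. Substituting this and the curvature identity into the intermediate expression from the first step produces exactly \eqref{eqn:commEqnExp}. As a sanity check, I would verify that when $f$ is symplectic all $\mathfrak{c}_i$ collapse correctly so that the formula reproduces the canonical relation $[\widehat{A},\widehat{B}]=-i\hbar\widehat{\{A,B\}}$, and compare against the direct computations in Examples \ref{ex:simple} and \ref{ex:foldedflat} to confirm the sign conventions.
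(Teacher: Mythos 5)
Your overall route is the same as the paper's: the proof given there is literally ``substitute the pullback data into \eqref{eqn:defComm} and proceed via direct calculation,'' and your first and last paragraphs carry that out correctly --- the four-term expansion, the identity $[X^M_{f^*A},X^M_{f^*B}]=X^M_{\mathfrak{p}}$, and the splitting $-i\hbar\nabla_{X^M_{\mathfrak{p}}}=-i\hbar X^M_{\mathfrak{p}}-(f^*\theta^N)(X^M_{\mathfrak{p}})$ are all sound. You have also correctly isolated the only step with real content, namely the identity $(f^*\omega^N)(X^M_{f^*A},X^M_{f^*B})=\mathfrak{p}\sum_i\mathfrak{c}_i$.

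That step, which you flag as the main obstacle and then dispose of with ``grouping terms lets one factor out $\mathfrak{p}$,'' is a genuine gap: the factorisation fails for $n\geq 2$. The coordinate expansion you propose gives
\begin{equation*}
(f^*\omega^N)(X^M_{f^*A},X^M_{f^*B})=\sum_{i=1}^n\Bigl(\{f^*A,f^*p'_i\}^M\{f^*B,f^*q'_i\}^M-\{f^*A,f^*q'_i\}^M\{f^*B,f^*p'_i\}^M\Bigr),
\end{equation*}
and this bilinear sum does not equal $\{f^*A,f^*B\}^M\cdot\sum_i\{f^*p'_i,f^*q'_i\}^M$ in general. Concretely, on $M=N=T^*\mathbb{R}^2$ take $f$ with $f^*p'_1=p_1$, $f^*q'_1=2q_1$, $f^*p'_2=p_2$, $f^*q'_2=q_2$, and $A=p'_1$, $B=q'_1$: then $\mathfrak{c}_1=2$, $\mathfrak{c}_2=1$, $\mathfrak{p}=2$, and the left-hand side evaluates to $4$ while $\mathfrak{p}\sum_i\mathfrak{c}_i=6$; the discrepancy is the spurious cross term $\mathfrak{p}\,\mathfrak{c}_2$ contributed by a coordinate pair on which $A$ and $B$ do not depend. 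The factorisation does hold when $n=1$, since any $2$-form on a surface is a function times $\omega^M$ and that function is exactly $\mathfrak{c}_1$ --- which is why Example \ref{example:pqsc} checks out --- and more generally whenever $f^*\omega^N=g\,\omega^M$ is a conformal rescaling. So you should either restrict to that situation or keep the unfactored sum; as written, the ``careful Poisson-algebraic manipulation'' you defer to cannot be completed. (The paper's one-line proof does not address this either, so the gap is inherited rather than introduced, but your sanity check against the examples would not catch it because those examples are effectively products of $n=1$ blocks probed one block at a time.)
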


    \begin{proof}
        Substitute the natural pullback structures into \eqref{eqn:defComm} and proceed via direct calculation.
    \end{proof}

The simplest quantisations of nontrivial phase spaces are those of the cylinder (phase space of a particle constrained to a circle) or $S^2$ (whose geometric quantisation gives rise to unitary irreducible representations of SU$(2)$, see \cite{maxwell2025structured} for a detailed discussion). The embedding of a compact cylinder inside a sphere provides a simple setting for a pullback quantisation.

    \begin{example}[Pseudo-Prequantum Squeezed Cylinder]\label{example:pqsc}
    Let $M=S^1\times [-L,L]$ be a cylinder of height $2L$ and $N = S^2_R$ a sphere of radius $R$, both equipped with their standard symplectic structure. Let $f_\lambda:M\rightarrow N$ be a smooth embedding with image the cylindrical strip $[-\frac{L}{\lambda},\frac{L}{\lambda}]\subset S^2_R$, as in Figure \ref{fig:squeezeCylinder}. Such a map only exists when $0 < \lambda < \frac{L}{R}$. If $L< R$ then $f_1$ exists and is symplectic. Varying $\lambda$ about $1$, we are able to \textit{``squeeze"} or \textit{``stretch"} the embedding and resulting quantisation.
    
        \begin{figure}[h]
            \centering
        \includegraphics[width=0.8\linewidth]{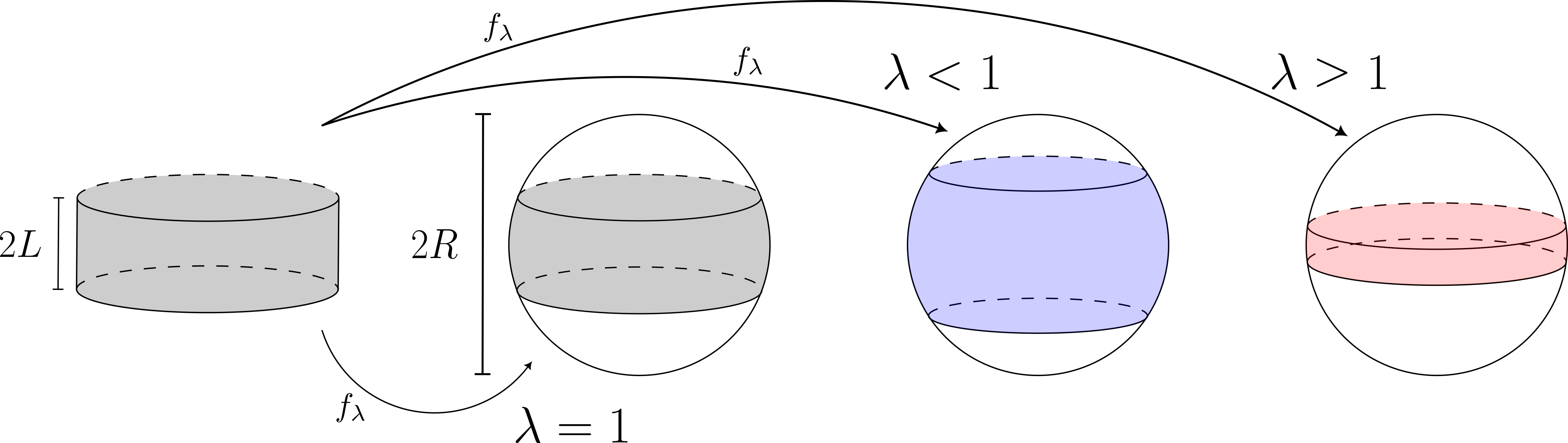}
            \caption{}
            \label{fig:squeezeCylinder}
        \end{figure}

    Let $(l,\phi_l)$ be canonical co-ordinates on $M$ and $(z,\phi_z)$ similarly on $N$. Over $M$ we have the trivial prequantisation, over $N$ we assume $R$ is such that $\omega^N $ satisfies the integrality criteria for the existence of a prequantum line bundle. Decorating commutators by their resident line bundles, we have
    \begin{equation}
        [\hat{l},\hat{\phi}_l]_{\mathbb{L}^M} = -i\hbar\,,\qquad [\hat{z},\hat{\phi}_z]_{\mathbb{L}^N} = -i\hbar.
    \end{equation}
    From theorem \ref{eqn:commEqn} we have
    \begin{equation}
    \left[\breve{z},\breve{\phi}_z\right]_{f^*(\mathbb{L}^N)} = -i\hbar\left(\frac{2\lambda-1}{\lambda^2}\right).
    \end{equation}
    This commutator vanishes as $\lambda\rightarrow\infty$ (though we refrain from interpreting this as some kind of semiclassical limit), vanishes at $\lambda=1/2$ and reverses sign at $\lambda=\sqrt{2}-1$.
    \end{example}

\subsection{Local Coordinates}

To calculate with these structures, it is helpful to have a more explicit local forms from which we can form differential equations. Some new notation is also necessary at this point: in our now standard setup $f:M\rightarrow N$ with coordinates $(p,q)$ on $M$ and $(p^\prime,q^\prime)$ on $N$, denote the pullbacks of functions $A\in C^\infty(N)$ by $\tilde{A}$, this includes the coordinates $\tilde{p}\coloneqq f^*p^\prime$ and $\tilde{q}\coloneqq f^*q^\prime$.

The connection form then decomposes as
\begin{equation}\label{eqn:pbConnectionLocal}
\begin{split}
(f^*\theta^N)&=\tilde{\theta}^N_{p^\prime}d\tilde{p} + \tilde{\theta}^N_{q^\prime}d\tilde{q}\\
&=\left(\tilde{\theta}^N_{p^\prime}\frac{\partial \tilde{p}}{\partial p} + \tilde{\theta}^N_{q^\prime}\frac{\partial \tilde{q}}{\partial p} \right)dp + \left(\tilde{\theta}^N_{p^\prime}\frac{\partial \tilde{p}}{\partial q} + \tilde{\theta}^N_{q^\prime}\frac{\partial \tilde{q}}{\partial q} \right)dq,
\end{split}
\end{equation}
where $\theta^N_{q^\prime}$ and $\theta^N_{p^\prime}$ are the coefficient functions of the 1-form $\theta^N$.
After expanding the Hamiltonian vector field $X^M_{\tilde{A}}$ generated by an observable $\tilde{A}$, the contraction takes the simple form
\begin{equation}\label{eqn:pbContractionLocal}
    \tilde{\theta}^N(X_{\tilde{A}}) = \tilde{\theta}^N_{p^\prime}\{\tilde{A},\tilde{p}\}^M + \tilde{\theta}^N_{q^\prime}\{\tilde{A},\tilde{q}\}^M.
\end{equation}
From this, we can write down a fully local expression for \eqref{eqn:pullbackOperator} for the pullback-induced pseudo-prequantisation of $A\in C^\infty(N)$. Let $\psi\in\Gamma(\mathbb{L})$, then
\begin{equation}
\breve{A}\tilde{\psi} =-i\hbar\left(\frac{\partial \tilde{A}}{\partial p}\frac{\partial \tilde{\psi} }{\partial q}-\frac{\partial \tilde{A}}{\partial q}\frac{\partial\tilde{\psi}}{\partial p}\right)-\left(\tilde{\theta}^N_{p^\prime}\{\tilde{A},\tilde{p}\}^M + \tilde{\theta}^N_{q^\prime}\{\tilde{A},\tilde{q}\}^M-\tilde{A}\right)\tilde{\psi}
\end{equation}

Given \eqref{eqn:pbConnectionLocal} and \eqref{eqn:pbContractionLocal}, it is straightforward, though a little messy, to decompose \eqref{eqn:commEqnExp} into a purely local form,
\begin{equation}\label{eqn:commEqnExp}
        \left[\breve{A},\breve{B}\right] = -i\hbar \left(-i\hbar X^M_{\mathfrak{p}}-(f^*\theta^N)(X^M_{\mathfrak{p}})+\mathfrak{p}\left(2-\sum_{i=1}^n\mathfrak{c}_i\right)\right),
        \end{equation}
At this point, the local expressions are becoming a bit too cumbersome; in the next section, we will begin making assumptions about the form of deformations and their behaviour relative to polarisations.

\section{Pseudo-Quantisation}\label{sec:pq}

\subsection{Polarisation}

Pseudo-prequantisation does not affect the construction of $\mathcal{H}_{\mathrm{pre}}$, thus we encounter the same problem in the standard theory that many elements of $\mathcal{H}_{\mathrm{pre}}$ are \textit{prima facie} unphysical. One standard method for addressing this is to form a new Hilbert space comprised only of sections which are covariantly constant along a global specification of ``position'' or ``momentum'' coordinates (or any complex combination thereof). This information is encoded in structures known as polarisations, which are closely related to Lagrangian foliations. In this section, we review the role of polarisation in producing a true quantum Hilbert space from $\mathcal{H}_{\mathrm{pre}}$ and extend this idea to define a new pseudo-quantum Hilbert space. The question of which observables preserve the pseudo-quantum Hilbert space is also addressed.

\begin{defn}\label{defn:pol}
    Let $M$ be a symplectic manifold. A polarisation $P\subset T^\mathbb{C}M$ is a (complexified) integrable Lagrangian distribution.
\end{defn}
Though much progress had been made in incorporating more singular cases \cite{hamilton2010locally,hamilton2010geometric}, in this work we restrict ourselves to reducible polarisations, where the leaves are fibrating and the natural projection is a submersion onto a manifold.
Let $M$ be a symplectic manifold with a Lagrangian foliation $\mathcal{F}$. Then a full-rank distribution $P\subset T\mathcal{F}\subset T^\mathbb{C}M$ is a real polarisation on $M$. The quantum Hilbert space with respect to a polarisation $P$ is thus defined as
\begin{equation}\label{eqn:qhilb}
    \mathcal{H}_P\coloneq\left\{\psi\in\Gamma(\mathbb{L})\,\,|\,\,\psi\in\mathrm{ker}(\nabla_{X})\,\,\text{for all }X\in\overline{P}\,\,\text{and} \,\,||\psi||_{M/P}<\infty\right\},
\end{equation}
where $\nabla_X$ above is understood to be acting on $\mathcal{H}_{\mathrm{pre}}$ and the norm is over the space of leaves $M/\mathcal{F}$. Elements of this kernel are known as \textit{polarised} or \textit{flat} sections. A common problem encountered in GQ is that unless $M/P$ comes equipped with a natural volume form defining an inner product, it is not always clear how to take the sections admitted to $\mathcal{H}_P$ and complete them to a full Hilbert space. Even then, the Hilbert space may still be trivial if the leaves of $P$ are non-compact.

A complete treatment of the construction of $\mathcal{H}_P$ for real polarisations necessitates the introduction of $\frac{1}{2}$-forms, which in some cases fix the problem of ill-defined Hilbert spaces. The remaining examples all we consider over either $S^2$ or $T^*\mathbb{R}^n$, where the $\epsilon_\omega$ ``split'' nicely along the $P$ and $P^\perp$  considered, i.e. we may simply consider $S^2/\partial\phi$ and $T^*\mathbb{R}^n/\partial_{\mathbf{p}}$ as having volume forms $dz$ and $\bigwedge_{i=1}^ndq_i$ respectively. Without loss of generality we may forgo explicit consideration of $\frac{1}{2}$-forms, relying the previously stated volume forms until \hyperref[sec:bks]{Section \ref{sec:bks}}, where the trivial bundle of polarised $\frac{1}{2}$-forms over $T^*\mathbb{R}$ is considered.

Continuing with our theme of simple generalisations, we take the pseudo-quantum Hilbert space to be defined similarly to its standard counterpart, albeit with a different connection.
\begin{defn}[pseudo-quantum Hilbert space]
Let $M$ be a symplectic manifold with pseudo-prequantisation $(\mathbb{L},h,\nabla,\Omega)_M$. The pseudo-quantum Hilbert space $\breve{\mathcal{H}}_P$ is defined as 
    \begin{equation}\label{eqn:pqhilb}
    \breve{\mathcal{H}}_P\coloneq\left\{\psi\in\Gamma(\mathbb{L})\,\,|\,\,\psi\in\mathrm{ker}(\nabla_{X})\,\,\text{for all}\,\,X\in\overline{P} \,\,\text{and} \,\,||\psi||_{M/P}<\infty\right\}.
\end{equation}
\end{defn}
Note that $\mathcal{H}_P$ and $\breve{\mathcal{H}}_P$ both have their Hilbert space structure defined by the inner product determined by the symplectic structure on $M$, that is, they are integrated against $\epsilon_\omega$. $\Omega$ certainly alters the local form of the sections, but the conditions \eqref{eqn:qhilb} and \eqref{eqn:pqhilb}  are otherwise identical.

\begin{example}[Pseudo-Quantisation of Folded $T^*\mathbb{R}^n$] 
Consider the setup from \hyperref[ex:foldedflat]{Example \ref{ex:foldedflat}}. In the trivialisation with connection form $\Theta = \frac{1}{2}p_1^2dq_1 +\sum_{i=2}^np_idq_i$, polarised sections in $\breve{\mathcal{H}}_{\partial_{\mathbf{p}}}$ and $\breve{\mathcal{H}}_{\partial_{\mathbf{q}}}$ take the general forms 
    \begin{align}
        \psi=F(\mathbf{q}),&\qquad\text{(momentum polarisation)}\\
        \psi=F(\mathbf{p})\exp\left\{\frac{i}{\hbar}\left(\frac{1}{2}p_1^2dq_1 +\sum_{i=2}^np_idq_i\right)\right\},&\qquad\text{(position polarisation)}
    \end{align}
    where in both cases $F$ is a square-integrable function and the norm is $L^2$. 
\end{example}

    \begin{example}[Bohr-Sommerfeld Leaves on Folded $S^2$]
    The unit sphere $S^2$ is prequantisable when equipped with a symplectic form $\omega=E\,dl\wedge d\phi$ satisfying $\frac{1}{2\pi\hbar}\int_{S^2}\omega\in\mathbb{Z}$. In the standard canonical coordinates $(z,\phi)$, $\partial_\phi$ describes a (singular) polarisation with compact leaves. If we take $\breve{H}^1(S^2;\mathcal{J}_{\partial_\phi})$ as the first sheaf cohomology computed with respect to the pseudo-quantisation connection, we can compare the lattices of integral points determining $\mathcal{H}_{\partial_\phi}$ and $\breve{\mathcal{H}}_{\partial_\phi}$ (see \cite{hamilton2010locally} for a discussion of the method). In the standard real quantisation\footnote{In this form of cohomological quantisation, the boundary points, corresponding to the elliptic singularities in the level sets of $l$ do not contribute, unlike in the K\"ahler case.}, $\mathrm{dim}(\mathcal{H}_{\partial_\phi})=2E-1$ with every point on a half-integer lattice being integral. For the \pq we have $\mathrm{dim}(\breve{\mathcal{H}}_{\partial_\phi})$ given by the number of solutions to $\frac{1}{2}(E^2-l^2)\in\mathbb{Z}$ where $|l|<E$. The lattices of integral points are shows in Figure \ref{fig:lattice}.
    \end{example}

        \begin{figure}[h]\label{fig:lattice}
        \centering
        \includegraphics[width=\linewidth]{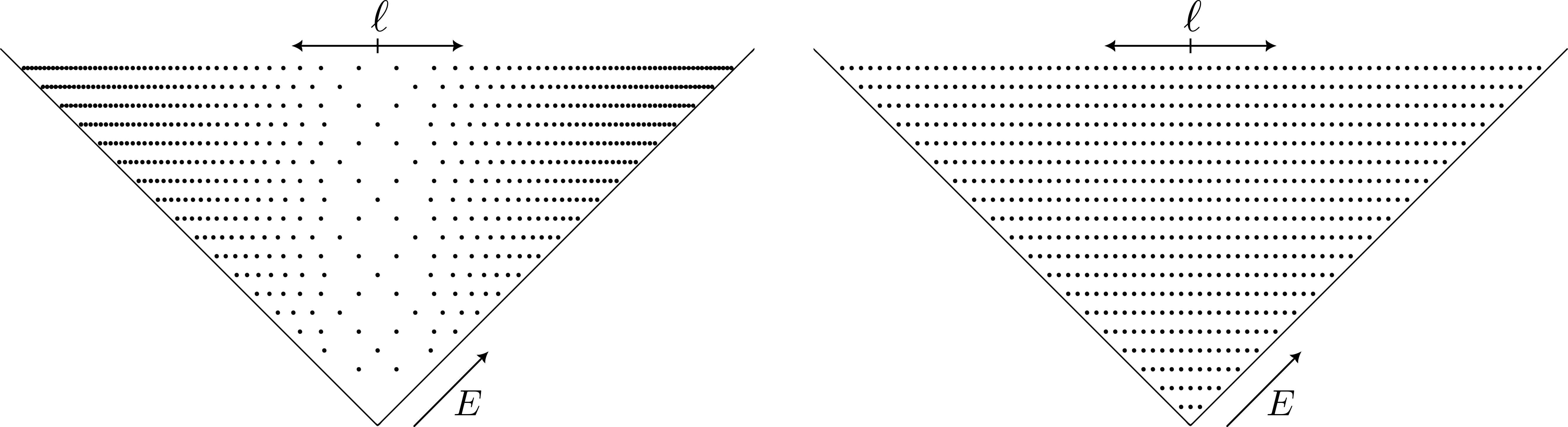}
        \caption{The integral points of the moment map for the \textit{folded} $S^2$ (left) and \textit{standard} $S^2$ (right).}
        \label{fig:placeholder}
    \end{figure}

\subsection{Polarisation Preservation}

Perhaps the most consequential feature of a polarisation is that it determines the set of quantisable observables, that is, observables $\widehat{A}$ are admissible iff $[\nabla_{X},\widehat{A}\,]\psi=0$ for all $\psi\in\mathcal{H}_P$ and $X\in \overline{P}$. In general there are two ways to construct admissible observables; first, one may show that the operator $\widehat{A}$ satisfies the above commutator (such observables are said to be \textit{directly quantisable}), or one may construct a unitary projection of $\widehat{A}\psi$ in $\mathcal{H}_{\mathrm{pre}}$ back into $\mathcal{H}_{P}$. In the standard theory on $T^*\mathbb{R}^n$, it is well known (see e.g. \cite{hall2013quantum}) that only observables at most linear in canonical momentum are directly quantisable, with quadratic momentum observables (including the standard kinetic term) quantisable in the second manner via the BKS pairing. In this section, we compute the admissible directly quantisable observables on $T^*\mathbb{R}^n$ for a general pseudo-quantisation.

If $(M,\omega)$ is a quantisable symplectic manifold with polarisation $P$, then the prequantum lift of a vector field $X$ preserves $P$ if $[X,Y]\in P$ for all $Y\in P$; there is an analogous condition for pseudo-quantisation.
\begin{thm}\label{thm:preserv}
Let $(M^{2n},\omega)$ be pseudo-prequantisable with curvature form $\Omega$ and $\{\beta_i\}_{i=1}^n$ a collection of smooth functions such that $\text{span}(\left\{X_{\beta_i}\right\}_{i=1}^n)$ form a reducible polarisation $P$. An observable $\breve{\alpha}$ preserves $\breve{\mathcal{H}}_P$ iff
\begin{equation}\label{eqn:polMorphism}
\widebreve{\left\{\alpha,\beta_i\right\}}\psi=\Omega(X_\alpha,X_{\beta_i})\;\psi,
\end{equation}
for each $X_{\beta_i}\in \overline{P}$, for every $\psi\in\breve{\mathcal{H}}_P$.
\end{thm}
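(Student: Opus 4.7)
The plan is to rewrite the preservation condition as a commutator identity, expand using the definition of $\breve{\alpha}$, and invoke the curvature formula \eqref{eqn:curvatureDef} applied to $\Omega$ instead of $\omega$. The statement "$\breve{\alpha}$ preserves $\breve{\mathcal{H}}_P$" means that for every $\psi\in\breve{\mathcal{H}}_P$ and every $X_{\beta_i}\in\overline{P}$, one has $\nabla_{X_{\beta_i}}(\breve{\alpha}\psi)=0$. Since $\nabla_{X_{\beta_i}}\psi=0$ by assumption, this is equivalent to the vanishing of the commutator $[\nabla_{X_{\beta_i}},\breve{\alpha}]\psi$. So my first move is to reduce the statement to showing that
\begin{equation}
[\nabla_{X_{\beta_i}},\breve{\alpha}]\psi = \bigl(\Omega(X_\alpha,X_{\beta_i}) - \widebreve{\{\alpha,\beta_i\}}\bigr)\psi.
\end{equation}

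Next, I expand $\breve{\alpha} = -i\hbar\nabla_{X_\alpha}+\alpha$ so that the commutator splits into two pieces. The multiplication piece gives $[\nabla_{X_{\beta_i}},\alpha]\psi = X_{\beta_i}(\alpha)\psi = \{\beta_i,\alpha\}\psi = -\{\alpha,\beta_i\}\psi$, which is immediate from Leibniz. The connection piece is $-i\hbar[\nabla_{X_{\beta_i}},\nabla_{X_\alpha}]$, and here I use the definition \eqref{eqn:curvatureDef} rearranged as
\begin{equation}
[\nabla_{X_{\beta_i}},\nabla_{X_\alpha}] = \nabla_{[X_{\beta_i},X_\alpha]} + \tfrac{1}{i\hbar}\,\Omega(X_{\beta_i},X_\alpha),
\end{equation}
together with the Hamiltonian-vector-field Lie bracket identity $[X_A,X_B] = X_{\{A,B\}}$ (which depends only on $\omega$, not on $\Omega$). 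This yields $-i\hbar[\nabla_{X_{\beta_i}},\nabla_{X_\alpha}] = i\hbar\nabla_{X_{\{\alpha,\beta_i\}}} + \Omega(X_\alpha,X_{\beta_i})$ after swapping signs.

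Adding the two pieces and recognising that $i\hbar\nabla_{X_{\{\alpha,\beta_i\}}} - \{\alpha,\beta_i\} = -\widebreve{\{\alpha,\beta_i\}}$ by the very definition of the pseudo-prequantum operator, I arrive at
\begin{equation}
[\nabla_{X_{\beta_i}},\breve{\alpha}]\psi = \Omega(X_\alpha,X_{\beta_i})\psi - \widebreve{\{\alpha,\beta_i\}}\psi,
\end{equation}
and the equivalence in \eqref{eqn:polMorphism} follows. The "if" and "only if" directions are symmetric because no cancellation is assumed beyond the tautology $\nabla_{X_{\beta_i}}\psi=0$.

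This is essentially a book-keeping argument, and there is no single hard step; the main thing to be careful about is not to silently replace $\Omega$ by $\omega$ when invoking curvature, since in standard GQ the identity $\Omega(X_\alpha,X_{\beta_i}) = \{\alpha,\beta_i\}$ causes the right-hand side of \eqref{eqn:polMorphism} to collapse into $\widebreve{\{\alpha,\beta_i\}}$ and thereby reduces Theorem \ref{thm:preserv} to the familiar condition $\widebreve{\{\alpha,\beta_i\}}\psi=\{\alpha,\beta_i\}\psi$. In the pseudo-quantum setting this collapse fails, and it is precisely the obstruction $\Omega(X_\alpha,X_{\beta_i}) - \omega(X_\alpha,X_{\beta_i})$ that will shrink the class of directly quantisable observables in the sequel.
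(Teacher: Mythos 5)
Your proposal is correct and follows essentially the same route as the paper: reduce preservation to the vanishing of $[\nabla_{X_{\beta_i}},\breve{\alpha}\,]\psi$ using $\nabla_{X_{\beta_i}}\psi=0$, then expand via the curvature definition \eqref{eqn:curvatureDef} and the identity $[X_{\beta_i},X_\alpha]=X_{\{\beta_i,\alpha\}}$. The paper leaves this expansion as a one-line remark; your write-up simply supplies the book-keeping it omits, and the signs check out against the definitions of $\breve{A}$ and $\mathrm{curv}(\nabla)$.
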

\begin{proof}
Take $\psi\in \breve{\mathcal{H}}_P$. By definition, $\breve{\alpha}$ preserves the polarisation iff for all $\beta_i$
        \begin{equation}
            [-i\hbar\nabla_{X_\alpha}+\alpha,\nabla_{X_{\beta_i}}]\psi=0.
        \end{equation}
        The result then follows directly from \eqref{eqn:curvatureDef}.
\end{proof}

It is obvious from \eqref{eqn:polMorphism} that $\breve{\mathcal{H}}_P$ is preserved by any constant functions as well as $\breve{\beta}_i$. Theorem \ref{thm:preserv} suggests that the cases where pseudo-quantisation defines a Poisson morphism on phase space or where $\Omega$ has Poisson leaves may possess particularly interrogable behaviour. Furthermore, if $\Omega$ is non-degenerate (and thus defines a symplectic structure on $M$), then we can introduce the Poisson bracket $\left\{\lambda(\alpha),\lambda(\beta_i)\right\}_\Omega\coloneq\Omega(X_\alpha,X_{\beta_i})$ and we have the intriguing case of all $\alpha$ being directly quantisable iff we have that the coincidence ``$\lambda(\alpha)=\breve{\alpha}$\,'' occurs and defines a Poisson map on $M$. We do not pursue these perspectives here, preferring a simple analysis of admissibility based on the polynomial dependence of the observables for this first work.

There are two simplifications of \eqref{eqn:polMorphism} which are useful to write down before going further. In the case where $\omega$ and $\Omega$ are cohomologous, i.e. there exists a $1$-form $\gamma$ such that $d\gamma = \omega-\Omega$, then \eqref{eqn:polMorphism} simplifies to
\begin{equation}
     -i\hbar\nabla_{X_{\{\alpha,\beta_i\}}}\psi+d\gamma(X_\alpha,X_{\beta_i})\psi=0.
\end{equation}
If $\Omega$ is a scaling of $\omega$ by some $f\in C^\infty(M)$ taking the form $\Omega = (f+1)\omega$, then \eqref{eqn:polMorphism} can be written as
\begin{equation}\label{eqn:scaledpreservcond}
    -i\hbar\nabla_{X_{\left\{\alpha,\beta_i\right\}}}\psi = f\left\{\alpha,\beta_i\right\}\psi.
\end{equation}

Now let us consider two cases of pseudo-quantisation by scaling based on the dependence of scaling function $f$. In the following cases, we will always let $\alpha_i,\beta_i$ be canonical momentum-position coordinates in a local neighbourhood $i:U\hookrightarrow M$ for which $\omega=d\alpha_i\wedge d\beta_i$. For $A=\alpha_i^m\beta_i^n$ with $m,n\in\mathbb{N}$, a smooth observable in a single set of canonical coordinates $\alpha_i, \beta_i$, we have
\begin{equation}
    X_{\{A,\beta_j\}} = \frac{\partial^2 a^m}{\partial a^2}\beta^n_i\partial_{\beta_i} -\alpha^m_i\frac{\partial \beta_i^n}{\partial \beta}\partial_{\alpha_i}.
\end{equation}
We always work with respect to an adapted connection (i.e. $\theta(X_{\beta_i})=0$), which admits to $\breve{\mathcal{H}}_P$ only those square-integrable functions depending only on the $\beta_i$. We can now consider some specific cases.

\subsubsection{Standard Case}
Recall that the directly quantised observables in standard geometric quantisation are those observables $A$ satisfying
\begin{equation}\label{eqn:basicpolcond}
    -i\hbar\nabla_{X_{\{A,\beta_i\}}} \psi= 0.
\end{equation}
From this, \eqref{eqn:basicpolcond} may also be expressed as the condition that
\begin{equation}
    \frac{\partial}{\partial \alpha_i}\left\{A,\beta_i\right\}=0, 
\end{equation}
the well known constraint the $A$ may only contain terms at most linear in $\alpha_i$. The general lesson drawn from \eqref{eqn:basicpolcond} is that any $\alpha^{\geq2}$ terms enter coupled to $\psi$ and its derivatives.
\subsubsection{Polarised Scaled Case}
Let us begin with the simplest possible case, where $\Omega = (1+f)\omega$ and $f$ is a polarised function (i.e. $f=f(\underline{\beta})$ so $X_{\beta_i}f=0$). Then we can write $\Theta=(1+f)\theta$ in the trivialisation of $\mathbb{L}$ in which the connection is adapted to $P$, (i.e. $\theta(X_{\beta_i})=\Theta(X_{\beta_i})=0$) being of form $\theta = \alpha_id\beta_i$. The $\psi\in\breve{\mathcal{H}}_P$ are square integrable sections given locally by $\psi = F(\underline{\beta})$ and \eqref{eqn:scaledpreservcond} for an observable $A = \alpha_i^m\beta^n_i$ is
\begin{equation}\label{eqn:modedscalarpreserv}
-i\hbar\frac{\partial^2\alpha_i^m}{\partial \alpha^2_i}\frac{\partial\psi}{\partial\beta_i} = \left(\frac{mf}{m-1}+1\right)\frac{\partial\alpha_i^m}{\partial \alpha}\psi,\quad\mathrm{for\,\,} m\geq 2.
\end{equation}
The cases $m=0,1$ with arbitrary $n$ preserve $\breve{\mathcal{H}}_P$ as the vanishing derivatives make \eqref{eqn:scaledpreservcond} trivial. Under our simplified assumptions on $f$,  \eqref{eqn:modedscalarpreserv} is only satisfied if $\psi$ is trivial. The directly quantisable observables in the pseudo-quantisation with respect to $\Omega = (1+f)\omega$ with a polarised $f$ are thus the same as those under the standard quantisation with respect to $\omega$.

\subsubsection{General Scaled Case}

    Consider this same situation, but now let $f\in C^\infty(M)$. Take the curvature form to be 
    \begin{equation}
        \Omega =d\left([1+f]\theta\right) = (1+f)\omega + df\wedge \theta.
    \end{equation}
    In the same adapted trivialisation as before, we still have polarised sections as being locally given by $L^2$ functions of $\beta$ only. Letting our observable $A$ enter via $\mathfrak{p} = \{A,\beta_i\}$, straightforward calculation shows that the polarisation is preserved by those $A$ giving rise to $\mathfrak{p}$ satisfying
    \begin{equation}
        \sum_{i=0}^n\frac{\partial\mathfrak{p}}{\partial \alpha_i}\left( \left(1+f+\frac{\partial f}{\partial\alpha_i}\right)\alpha_i\,\psi + i\hbar\frac{\partial\psi}{\partial\beta_i}  \right) + f\,\mathfrak{p}\,\psi=0.
    \end{equation}
    There are some immediate contrasts with the case of polarised $f$. If $\mathfrak{p}$ is a constant, i.e. a canonical commutator, then only $f=0$ allows the preservation of the Hilbert space. Any higher-order observables, naturally, depend on $\psi$.

\section{BKS Pairings of Pseudo-quantisations}\label{sec:bks}
To complete our analysis of geometric pseudo-quantisation, in this section, we construct and compute the BKS maps for quadratic momentum observables in pseudo-quantisations generated by the scaling function $f$ being polarised or anti-polarised. The reader is referred to \cite{woodhouse1992geometric} or \cite{sniatycki2012geometric} for background on the BKS construction.\\

The BKS pairing \cite{woodhouse1992geometric} is the statement 
\begin{equation}\label{eqn:bks}
        \left\langle \frac{d\widetilde{\psi}}{dt},\widetilde{\psi}^\prime\right\rangle = \lim_{\tau\rightarrow0}\left[-\frac{d}{dt^{\prime}}\left\langle\!\!\left\langle \left(\widetilde{\rho}^{\,A}_{\tau}\right)^*\widetilde{\psi},\widetilde{\psi}^\prime\right\rangle\!\!\right\rangle\right],
\end{equation}
where the LHS is an inner product on $\breve{\mathcal{H}}_P$ and the RHS a pairing map composed of an inner product over $\breve{\mathcal{H}}_{\mathrm{pre}}$ and the pairing of polarised $\frac{1}{2}$-form sections $\nu,\nu^\prime\in\Gamma(\delta)$ is given by
\begin{equation}
    \langle \sqrt{\nu},\sqrt{\nu^\prime}\rangle = \sqrt{(\nu\otimes\nu)\wedge(\overline{\nu^\prime\otimes\nu^\prime})},
\end{equation}
where $\delta$ is a square root of the canonical bundle (i.e., $\delta\otimes\delta$ is isomorphic to the canonical bundle $\mathcal{K}(M)$) of the symplectic manifold in question. In what follows we have only the trivial $\delta$ over $T^*\mathbb{R}$, for which we take $\sqrt{dq}$ or $\sqrt{d\beta}$ to be the trivialising polarised sections. We remain working in canonical coordinates where either $\omega = dp_i\wedge dq_i$ or $\omega = d\alpha_i\wedge d\beta_i$.

To begin, we review the standard derivation of the Schr\"odinger equation for describing the flow of $\widetilde{\psi} = \psi\otimes\sqrt{dq}$ under a Hamiltonian $H\in C^\infty(T^*\mathbb{R})$ of the form $H=\frac{1}{2}p^2+V(q)$. With respect to the standard prequantum structure, the integral in question is the $dp$ component of the pairing in \eqref{eqn:bks}. After expanding the equation is

\begin{equation}\label{eqn:basicBKS}
    \frac{d\psi}{dt}=-\lim_{\tau\rightarrow 0}\left[\frac{d}{d\tau}\int_{-\infty}^{\infty}\left(\sum_{\lambda=0}^\infty\frac{(\tau p)^\lambda}{\lambda!}\frac{d^\lambda \psi}{d q^\lambda}\right)\sqrt{\tau}\,e^{\frac{i}{\hbar}(\frac{1}{2}p^2-V(q))\tau}dp\right].
\end{equation}
The $\lambda=1$ term vanishes by symmetry of the integral and the $\lambda=0,2$ contributions may be calculated by manipulating \eqref{eqn:basicBKS} into a generalised (even) Fresnel integral or the method of stationary phase (see \cite{sniatycki2012geometric} for more details), with higher order terms always vanishing in the limit. Evaluating the integral gives
\begin{equation}\label{eqn:almostTDSE}
    i\hbar\frac{d\psi}{dt} = -\sqrt{2\pi\hbar}\,e^{i\frac{\pi}{4}}\left(-\frac{\hbar^2}{2}\frac{d^2\psi}{dq^2} + V(q)\,\psi\right).
\end{equation}
The prefactor may be absorbed into the definition of the pairing to yield the standard Schr\"odinger equation for $H$.

We take up now the case of the local theory of a general deformation of the connection of the form $\Theta = (1+f)\theta$ for $f\in C^\infty(T^*\mathbb{R})$. In the polarisation generated by $X_{\beta}$ and in the adapted gauge, polarised sections are given by
\begin{equation}
    \widetilde{\psi} = \psi(\beta)\otimes\sqrt{\beta}
\end{equation}
where $\sqrt{\underline{\beta}}$ is a unit section of the trivial bundle of $X_{\beta_i}$ polarised $\frac{1}{2}$-forms.

\begin{thm}[No Schr\"odinger Equation for Momentum Deformations]\label{thm:main1}
Let $\Theta = (1+f)\theta$ define a pseudo-quantisation of $(T^*\mathbb{R}^n,d\alpha\wedge d\beta)$ in the polarisation generated by $X_\beta$ and $f$ be a linear or greater polarised monomial function. Then, for the observable $\frac{1}{2}\widebreve{\alpha^2}$, the BKS pairing does not converge.
\end{thm}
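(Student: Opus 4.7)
My plan is to execute the BKS derivation in the pseudo-quantum setting, in direct analogy with the standard treatment culminating in \eqref{eqn:almostTDSE}, and then diagnose why the limit fails to exist.

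First, I would record the classical data: $H=\tfrac{1}{2}\alpha^{2}$ has Hamiltonian vector field $X_{H}=\alpha\partial_{\beta}$ whose flow preserves $\alpha$ and translates $\beta$ linearly by $\tau\alpha$. With $\theta=\alpha\,d\beta$ adapted to $P=\mathrm{span}(X_{\beta})$ and $\Theta=(1+f)\theta$, the prequantum action picked up under the pseudo-quantum propagator is
\[
\Phi(\tau,\alpha,\beta)\;=\;\int_{0}^{\tau}\bigl[\Theta(X_{H})-H\bigr]\,ds\;=\;\tfrac{1}{2}\tau\alpha^{2}+\alpha^{2}\!\int_{0}^{\tau}\!f(\beta+s\alpha)\,ds.
\]
For the polarised monomial $f=\beta^{n}$ the $s$-integral evaluates to $\alpha^{-1}\bigl[(\beta+\tau\alpha)^{n+1}-\beta^{n+1}\bigr]/(n+1)$, giving a closed-form $\Phi$ that a binomial expansion reorganises as $\tau\alpha^{2}(\tfrac{1}{2}+\beta^{n})$ plus a finite cascade of correction terms $c_{k}\,\tau^{k}\alpha^{k+1}\beta^{n+1-k}$ for $k=2,\ldots,n+1$.

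Second, I would assemble the generalised BKS integrand in the shape of \eqref{eqn:basicBKS}, namely
\[
I(\tau)\;=\;\int_{-\infty}^{\infty}\!\!\Bigl[\sum_{\lambda\geq 0}\tfrac{(\tau\alpha)^{\lambda}}{\lambda!}\psi^{(\lambda)}(\beta)\Bigr]\sqrt{\tau}\;e^{i\Phi/\hbar}\,d\alpha,
\]
perform the Fresnel rescaling $u=\alpha\sqrt{\tau}$, and attempt to read off the Schr\"odinger equation from $-\lim_{\tau\to 0}\tfrac{dI}{d\tau}$. Under this substitution the leading piece becomes the canonical $u^{2}(1+2\beta^{n})/(2\hbar)$ Gaussian, and the correction phase terms rescale as $\tau^{(k-1)/2}u^{k+1}\beta^{n+1-k}$, culminating in a $\tau^{n/2}u^{n+2}$ term; the Taylor coefficients rescale as $\tau^{\lambda/2}u^{\lambda}\psi^{(\lambda)}/\lambda!$.

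Third --- and this is the main analytical work --- I would expand $e^{i\Phi/\hbar}$ as a formal power series in $\sqrt{\tau}$ about the leading Gaussian, multiply against the Taylor series for $\psi(\beta+\sqrt{\tau}u)$, and collect terms by half-integer powers of $\tau$. Odd-parity $u$-moments drop out, exactly as the $\lambda=1$ contribution was killed by symmetry in the standard derivation; the remaining contributions are a sum of even Fresnel moments $\int u^{2k}e^{iau^{2}}du\propto a^{-(k+1/2)}$ multiplied by derivatives of $\psi$ and by powers of $\beta^{n}, \beta^{n-1},\ldots$. The claim then reduces to identifying, somewhere in this tower of cross-terms, a coefficient whose net $\tau$-scaling is strictly negative after $\tfrac{d}{d\tau}$ has acted, which would force $\lim_{\tau\to 0}\tfrac{dI}{d\tau}$ to diverge.

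The main obstacle is precisely this combinatorial bookkeeping. The formal expansion is generated by a double series in $\sqrt{\tau}$ --- one copy from the Taylor expansion of $\psi$, one copy from the exponential of the non-Gaussian phase corrections --- and any claim of divergence must rule out a conspiratorial cancellation between the products $\psi^{(\lambda)}\cdot(\text{phase correction})^{j}$ and the corresponding Fresnel moments. The key task is to exhibit one surviving term of negative $\tau$-order and to show pointwise in $\beta$ that it cannot be eliminated by any combination of symmetric cancellations available in the standard case. Once such a term is produced, the failure of the BKS limit --- and hence the non-existence of a Schr\"odinger flow for $\tfrac{1}{2}\widebreve{\alpha^{2}}$ --- follows immediately.
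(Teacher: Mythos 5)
Your proposal analyses the wrong deformation, and as a result the divergence you are looking for is not there. The paper's proof of Theorem \ref{thm:main1} (consistent with its title, ``Momentum Deformations'') takes $f$ to be a monomial in the \emph{momentum} coordinate, $f=\tfrac{\lambda}{2}\alpha_\nu^{n}$. Since the flow of $X_{\alpha^{2}/2}=\alpha\partial_\beta$ preserves $\alpha$, the action integral is then exact and the phase becomes $\exp\{\tfrac{i\tau}{2\hbar}\alpha_\nu^{2}(1+\lambda\alpha_\nu^{n})\}$ --- a polynomial of degree $n+2>2$ in the integration variable. You instead set $f=\beta^{n}$, a function of position; but that is precisely the situation of Theorem \ref{thm:main2}, where the pairing \emph{converges} and yields a Schr\"odinger equation. (The paper's terminology invites this confusion: Section \ref{sec:pq} defines ``polarised'' as $f=f(\underline{\beta})$, yet the proofs of Theorems \ref{thm:main1} and \ref{thm:main2} use $f=f(\alpha)$ and $f=f(\beta)$ respectively, so you have been misled by a literal reading of the statement.) In your setup the leading term $\sqrt{\tau}\int e^{\frac{i\tau}{2\hbar}\alpha^{2}(1+2\beta^{n})}\,d\alpha$ is exactly $\tau$-independent after the Fresnel rescaling $u=\alpha\sqrt{\tau}$, the $\tau^{1/2}$ corrections are odd in $u$ and vanish by parity, and everything else enters at order $\tau^{\geq 1}$; there is no term of negative $\tau$-order to exhibit, so your programme cannot close.

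The mechanism the paper actually uses is more direct and requires none of the combinatorial bookkeeping you worry about. With $f=\tfrac{\lambda}{2}\alpha_\nu^{n}$ the Gaussian rescaling no longer renders the integral $\tau$-independent: one must instead set $\mu=\alpha_\nu\tau^{1/(n+2)}$ so that the dominant phase $\tfrac{i\lambda}{2\hbar}\mu^{n+2}$ loses its $\tau$-dependence, and the remaining quadratic piece is expanded as a power series in a positive power of $\tau$. The lowest contribution --- the term the paper calls $I_{n,0,0}$, proportional to $F\,\overline{F^{\prime}}$ with no phase correction --- then scales as a fractional power of $\tau$ strictly between $0$ and $1$ (roughly $\tau^{1/2-1/(n+2)}$, in place of the $\tau^{0}$ scaling of the undeformed case), so its $\tau$-derivative diverges as $\tau\rightarrow 0$. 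A single manifestly divergent term at the very bottom of the expansion suffices; one does not need to rule out conspiratorial cancellations across the whole double series, because nothing else in the expansion carries that order in $\tau$. Note also that, even on its own terms, your write-up stops short of producing the required term: you correctly identify exhibiting it as ``the key task'' but leave it undone.
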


\begin{proof}
Consider the $\sqrt{d\alpha}$ components of the expansion of the flow under $\widetilde{\rho}^A_{t^\prime}$ for an arbitrary observable $A$.
\begin{equation}\label{eqn:fullbksexpansion}
    \begin{split}
\left(\widebreve{\rho}^{\,A}_{\tau}\right)^*\widetilde{\psi}(\underline{\beta}) &= \left(F(\underline{\beta})+\tau\frac{\partial A}{\partial \alpha_\lambda}\frac{\partial F}{\partial \beta_\lambda} + \frac{\tau^2}{2}\left[\left(\frac{d}{dt}\frac{\partial A}{\partial \alpha_\lambda}\right)\frac{\partial F}{\partial\beta_\lambda}+ \left(\frac{\partial A}{\partial \alpha_\lambda}\right)^2\frac{\partial^2F}{\partial\beta^2_\lambda}\right]+\mathcal{O}(\tau^3)\right)\\
        &\cdots\times\exp\left\{\frac{i}{\hbar}\left(-A\tau+\int_0^{\tau}\sum_{k=1}^n\alpha_k(1+f)\frac{\partial A}{\partial \alpha_k}\circ\gamma(t^\prime)\,dt^\prime \right)\right\}\\
        &\cdots \otimes\sum_{i=1}^n\sqrt{\tau\sum_{j=1}^n\frac{\partial^2A}{\partial \alpha_j\partial\alpha_i}d\alpha_j+\mathcal{O}(\tau^2)},
    \end{split}
\end{equation}
noting that $f$ only enters into the phase term. Taking $A=\frac{1}{2}\alpha_\nu^2$ and evaluating the integral in the exponent via the polarisation of $f$, we have 
\begin{equation}\label{eqn:BKSexpansion}
    \begin{split}
    \left(\widebreve{\rho}^{\,\frac{1}{2}\alpha_\nu^2}_{t^\prime}\right)^*\widetilde{\psi}(\underline{\beta}) &= \left(F(\underline{\beta})+\tau\alpha_\nu\frac{\partial F}{\partial\beta_\nu}+\frac{1}{2}{\tau}^2\alpha^2_\nu\frac{\partial^2F}{\partial\beta^2_\nu}+\mathcal{O}({\tau}^3)\right)\\
        &\cdots\times\exp\left\{\frac{i}{2\hbar}\alpha^2_\nu \tau(1+2f)\right\}\otimes\sqrt{\tau d\alpha_\nu+\mathcal{O}(\tau^2)}.
    \end{split}
\end{equation}
Let the monomial deformation take the form $f=\frac{\lambda}{2}\alpha_\nu^n$. Denote the resulting integral contribution to \eqref{eqn:bks} proportional to $\left(\partial^m_{\beta_\nu}F\right)\,\overline{F^\prime}$ (both independent of $\alpha$) as $I_{n,m}$. These contributions take the form
\begin{equation}
    I_{n,m} = -\lim_{\tau\rightarrow 0}\frac{d}{d\tau}\left[\int_{-\infty}^\infty \frac{\tau^{m+1/2}}{m!}\exp\left\{\frac{i\tau}{2\hbar}\alpha^2_\nu(1+\lambda\alpha_\nu^n)\right\}d\alpha_\nu\right].
\end{equation}
The general pattern for evaluating these expressions is to isolate $\tau$ to the non-dominant term in the phase and expand as a power series. Let $\mu = \alpha_\nu\,\tau^{\frac{1}{n+2}}$, then the $j$-th term of the resulting expansion is
\begin{equation}\label{eqn:inmj}
\begin{split}
    I_{n,m,j}\coloneq -\lim_{\tau\rightarrow 0}\left[\tau^{-\frac{1}{2}+m-\frac{1}{2n}+\frac{jn}{2+n}}\right]&\left(\frac{1}{j!\,m!}\left(\frac{i}{2\hbar}\right)^j\left(\frac{1}{2}+m-\frac{1}{2n}+\frac{jn}{2+n}\right)\right)\\
    &\cdots\times\int_{-\infty}^\infty \mu^{2j}e^{\frac{i\lambda\mu^{n+2}}{2\hbar}}d\mu.
\end{split}
\end{equation}
For a fixed $n$ and $m$, there is a unique rational $j^\prime$ satisfying the condition
\begin{equation}
    j^\prime= \frac{(n+2)(n-2mn+1)}{2n^2},
\end{equation}
corresponding to the case when the limit in \eqref{eqn:inmj} goes to unity. If $j^\prime$ is a positive integer, then this limit occurs for one of the terms in the expansion: $I_{n,m,j^\prime}$, and is the unique (in the sense of being possibly nonzero) contribution to $I_{n,m}$. Additionally, every limit where $j>j^\prime$ will vanish and every limit where $j<j^\prime$ will either vanish or diverge.
The BKS pairing for the $n^\mathrm{th}$ order deformation can only be well-defined if all $I_{n,m,j}$ are finite\footnote{Though even if the limit with respect to $\tau$ is finite, the contribution can diverge in other ways}.

To prove the theorem, consider that the condition for $I_{n,m,j}$ to diverge in the limit $\tau\rightarrow 0$ is
\begin{equation}
    -\frac{1}{2}+m-\frac{1}{2n}+\frac{jn}{2+n}<0.
\end{equation}
As $n>0$ it is clear that $I_{n,0,0}$ always diverges in the limit and thus \eqref{eqn:bks} never converges for the observable and deformations in question.
\end{proof}
We now prove the analogous result for deformations normal to the polarisation.
\begin{thm}[Schr\"odinger Equation for Position Deformations]\label{thm:main2}Let $\Theta = (1+f)\theta$ define a pseudo-quantisation of $(T^*\mathbb{R}^n,d\alpha_i\wedge d\beta_i)$ in the polarisation generated by $X_\beta$ and $f$ be a linear or greater monomial function on the leaves of the polarisation (i.e. $f=f(\alpha)$). Then, for the observable $\frac{1}{2}\widebreve{\alpha^2}$, the BKS pairing is finite and generates the following equation of motion for $\frac{1}{2}\widebreve{\alpha^2}$.
\begin{equation}
    i\hbar\frac{dF}{dt} = -\frac{\hbar^2}{2}\frac{1}{(1+2\beta^n)^{3/2}}\frac{d^2 F}{d\beta^2}. 
\end{equation}
\end{thm}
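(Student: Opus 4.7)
The plan is to parallel the proof of Theorem \ref{thm:main1} starting from (\ref{eqn:fullbksexpansion}) with observable $A = \tfrac{1}{2}\alpha_\nu^2$, but to exploit the fact that when $f$ depends only on the leaf-label coordinate (so the factor $(1+2f)$ emerging in the phase does not depend on the integration variable $\alpha_\nu$), each Fresnel integral that appears in the $\tau$-expansion is genuinely Gaussian rather than possessing the higher-degree phase terms that caused divergence in the polarised case.

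First I would compute the Hamiltonian flow of $A = \tfrac{1}{2}\alpha_\nu^2$, namely $X_A = -\alpha_\nu\partial_{\beta_\nu}$, and note that $\alpha_\nu$ is preserved along it. Taylor expanding $f\circ\gamma(t')$ about the initial point and inserting into (\ref{eqn:fullbksexpansion}), the phase reduces to $\tfrac{i\tau}{2\hbar}\alpha_\nu^2(1+2f) + O(\tau^2)$, matching (\ref{eqn:BKSexpansion}); crucially, because $f$ is independent of $\alpha_\nu$ in the present hypothesis, the $\alpha_\nu$-integral is purely Gaussian with a $\beta$-dependent width.

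Next I would process the wavefunction expansion term by term by evaluating the family of Fresnel integrals
\begin{equation*}
J_m(\tau,\beta) = \int_{-\infty}^\infty \alpha_\nu^m\, e^{\frac{i\tau\alpha_\nu^2}{2\hbar}(1+2f(\beta))}\,d\alpha_\nu,
\end{equation*}
obtaining $J_0 \propto \tau^{-1/2}(1+2f)^{-1/2}$, $J_2 \propto \tau^{-3/2}(1+2f)^{-3/2}$, and $J_m = 0$ for odd $m$. Combining with the $\tau^m/m!$ prefactor from the wavefunction expansion and the $\sqrt{\tau}$ from the half-form, the $m$-th contribution scales as $\tau^{m/2}$: applying $-d/d\tau\big|_{\tau\to 0}$ annihilates the $m=0$ term (a $\tau$-independent $\beta$-dependent constant), odd $m$ terms vanish by parity in $\alpha_\nu$, and $m\geq 4$ contributions vanish as positive powers of $\tau$. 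Only the $m=2$ term survives and produces the coefficient $(1+2f(\beta))^{-3/2}$ multiplying $\tfrac{d^2F}{d\beta^2}$. Substituting the monomial form $f = \beta^n$ and absorbing the overall $\sqrt{2\pi\hbar}\,e^{i\pi/4}$ prefactor into the definition of the pairing (exactly as in the step from (\ref{eqn:almostTDSE}) to the standard Schr\"odinger equation) yields the stated equation of motion.

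The main obstacle is verifying that the $O(\tau^2)$ corrections, both from the Taylor expansion of $f\circ\gamma(t')$ in the exponent and from the higher-order terms under the half-form $\sqrt{\tau\,d\alpha_\nu + O(\tau^2)}$, do not leak into the $m=2$ channel and alter the surviving coefficient. As in the standard BKS derivation, each additional $\tau$ carried in the phase contributes, after Gaussian integration, an extra factor of at least $\tau^{1}$ in the final expression, which is strictly subleading compared to the $m=2$ contribution's $\tau^1$ scaling and therefore dies under $-d/d\tau$ at $\tau=0$. This is a clean mirror of Theorem \ref{thm:main1}: there, non-Gaussian $\alpha_\nu^{n+2}$ corrections to the phase produced inverse powers of $\tau$ that overwhelmed the $\tau^{1/2}$ from the half-form; here, the deformation only reweights a genuinely quadratic phase and finiteness is automatic.
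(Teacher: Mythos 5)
Your proposal follows essentially the same route as the paper's proof: starting from the expansion \eqref{eqn:fullbksexpansion}, Taylor-expanding the phase so that the $\alpha_\nu$-integral becomes a genuine Gaussian with $\beta$-dependent width $(1+2\beta^n)$, using the $\tau$-scaling of the Fresnel moments (your explicit $J_m$ computation is exactly the paper's substitution $\mu=\alpha\sqrt{\tau}$ in different clothing) to show that only the second-order channel survives the limit, and absorbing the standard prefactor as in \eqref{eqn:almostTDSE}. Your closing order-counting argument for why the corrections from $f\circ\gamma(t')$ and the half-form are subleading mirrors the paper's own treatment of the second exponential, so the proposal is correct to the same standard and by the same method as the published proof.
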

\begin{proof}
    We begin from \eqref{eqn:fullbksexpansion}, resolving the integral in the phase by Taylor-expansion about $t$ gives the explicit BKS pairing
\begin{equation}\label{eqn:main2initial}
\left\langle \frac{d\widetilde{\psi}}{dt},\widetilde{\psi}^\prime\right\rangle =
\lim_{\tau\rightarrow 0}\frac{d}{d\tau}\left[\int d\underline{\beta}\, \overline{F}^\prime\int d\underline{\alpha}\, \sqrt{t}\left(\sum_{\lambda=0}^\infty \frac{\tau^\lambda}{\lambda!}\underline{\alpha}^\lambda\frac{\partial^\lambda F}{\partial\underline{\beta}^\lambda} \right)\,e^{\frac{i\tau \alpha^2}{2\hbar}}\,\exp\left\{\frac{i}{\hbar}\alpha_\nu^2\int_0^\tau\beta^n_\nu\circ\gamma(t)\,dt\right\}\right].
\end{equation}
Let $I_\lambda$ denote the contributing terms to the integral proportional to the $\lambda^{\mathrm{th}}$ order in the expansion of $F$. Following the same process as in the proof of Theorem \ref{thm:main1}, letting $\mu=\alpha\sqrt{\tau}$, we have
\begin{equation}
    I_\lambda = \mu^\lambda\tau^{\frac{\lambda}{2}}\exp\left\{\frac{i}{\hbar}\frac{\mu^2}{2}(1+2\beta_\nu^n)\right\}\exp\left\{\frac{i}{\hbar}\sum_{j=1}^n\mu^j\tau^{\frac{j}{2}+1}\beta^{n-j}\frac{n!}{(n-j)!\,(j+1)!}\right\}.
\end{equation}
Note that for any $n$, the lowest order terms in $\tau$ contributed by the power series of the second exponential are of $\mathcal{O}(\tau^0)$ and $\mathcal{O}(\tau^{3/2})$. Thus, all terms contributed by this series apart from those of $\mathcal{O}(\tau^0)$ vanish in the limit $\lim_{\tau\rightarrow 0}\frac{d}{dt}I_\lambda$, independent of $\lambda$. From here it is clear that the only possible nonzero contribution comes from $I_2$ in the identity component of the exponential power series, i.e. \eqref{eqn:main2initial} reduces to
\begin{equation}
     \frac{d\widetilde{\psi}}{dt} =-\frac{d^2F}{d\beta^2}\int_{-\infty}^{+\infty}d\mu\,\frac{\mu^2}{2}\exp\left\{\frac{i}{\hbar}\mu^2(1+2\beta_\nu^n)\right\}.
\end{equation}
We can integrate and absorb the same prefactors occurring in \eqref{eqn:almostTDSE} into the definition of the pairing to obtain the desired relation, the Schr\"odinger equation for a simple position type deformation.
\end{proof}

\vspace{0.25cm}
\subsection*{Acknowledgements}
The author is thankful to Duncan O'Dell for many useful conversations and his hospitality during much of this work. This work was supported by the EPSRC Centre for Doctoral Training in Topological Design (EP/S02297X/1) and a MITACS Globalink Research Award (IT40628).
\printbibliography
\end{document}